\newcommand{\dollar}[0]{\$}
\newcommand{\LofC}[0]{\left\langle\right.}
\newcommand{\RofC}[0]{\left.\right\rangle}
\newtheorem{fact}{Fact}
\title{Tight bounds for the  space complexity of nonregular language recognition by real-time machines}
\author{Abuzer Yakary{\i}lmaz$ ^{\mbox{\tiny 1,}} $\thanks{Yakary{\i}lmaz was partially supported by the Scientific and Technological Research Council of Turkey (T\"{U}B\.ITAK) with grant 108E142 and FP7 FET-Open project QCS.}
\and
A.C. Cem Say$ ^{\mbox{\tiny 2,}} $\thanks{Say's work was partially supported by the Scientific and Technological
Research Council of Turkey (T\"{U}B\.ITAK) with grant 108E142.}}
\institute{$ ^{\mbox{\tiny 1}} $University of Latvia, Faculty of Computing, Raina bulv. 19, R\={\i}ga, LV-1586, Latvia \\
\email{abuzer@lu.lv} ~~ \\
$ ^{\mbox{\tiny 2}} $Bo\u{g}azi\c{c}i University, Department of Computer Engineering, Bebek 34342 \.{I}stanbul, Turkey
\\ \email{say@boun.edu.tr} ~~
 \\~~\\
\today
}
\begin{document}

\newlength{\twidth}
\maketitle
\pagenumbering{arabic}

%-----------------------------------------------------------------------------%
\begin{abstract} \label{abstract:Abstract}
%-----------------------------------------------------------------------------%

We examine the minimum amount of memory for real-time, as opposed to one-way, computation accepting nonregular languages. We consider deterministic, nondeterministic and alternating machines working within strong, middle and weak space, and processing general or unary inputs. In most cases, we are able to show that the lower bounds for one-way machines remain tight in the real-time case. Memory lower bounds for nonregular acceptance on other devices are also addressed. It is shown that increasing the number of stacks of real-time pushdown automata can result in exponential improvement in the total amount of space usage for nonregular language recognition.

\end{abstract}

\section{Introduction}
The effects of restricting a Turing machine (TM) to a single left-to-right pass of its input string have been well studied since the early years of computational complexity theory. An important distinction in this realm is the one between real-time computation \cite{Ra63B}, in which the machine is required to spend a single computational step on each input symbol, and  one-way computation \cite{SHL65}, where the input head is allowed to pause for some steps on the tape. These two modes are known to be equivalent in language recognition power in many setups. An important exception is quantum automata theory, where it has been shown that quantum versions of real-time finite automata \cite{AI99,YS11A} and pushdown automata \cite{YFSA12A} are strictly less powerful than their more general one-way versions in some modes of language recognition.

In this paper, we compare the real-time and one-way modes of computation from the point of view of the minimum amount of memory required by computers to recognize nonregular languages. 
A recent trend in the study of the lower limits of computation is the consideration of minimal amounts of combinations of various computational resources to perform certain tasks. Simultaneous lower bounds linking working space with input head reversals \cite{BMP94B,BMP94C,BMP95A,GMP98}, ambiguity degree \cite{BMP94A}, and runtime \cite{Pig09} have been established in the context of nonregular language recognition by many TM variants. Since real-time computation naturally involves the strictest possible time bound, our work can be seen as part of this literature.

Several variants of the TM model \cite{Si06} will be used in this study. All machines are equipped with a read-only input tape and a single worktape with a two-way read/write head. We will distinguish between the strong, middle, and weak modes of space complexity, whose definitions are reviewed below. We assume without loss of generality that none of the TMs we consider ever moves the work tape head to the left of its initial position.

\begin{itemize}
	\item A TM is said to be \textit{strongly $s(n)$ space-bounded} if all reachable configurations on any input of length $n$ have the work tape head no more than $s(n)$ cells away from its initial position.\footnote{We use the head position, rather than the number of nonblank cells \cite{Sz94}, for measuring space usage, since real-time quantum TMs are known \cite{YFSA12A} to be able to recognize nonregular languages by a technique that involves just moving the work head ($O(n)$ steps) to the right, without writing any nonblank symbols.}
	\item A TM is said to be \textit{middle $s(n)$ space-bounded} if all reachable configurations on any accepted input of length $n$ have the work tape head no more than $s(n)$ cells away from its initial position.
	\item A TM is said to be \textit{weakly $s(n)$ space-bounded} if, for any accepted input of length $n$, there exists at least one accepting computation containing no configurations that have the work tape head more than $s(n)$  cells away from its initial position.
\end{itemize}

In the rest of the paper, Section \ref{Sec:ATM} considers several specializations of the alternating TM model for both tally languages (i.e. those with unary input alphabets) and general ones, establishing that most minimum space bounds for nonregular language recognition coming from previous work on one-way versions of these machines remain tight in the real-time case. Similar questions about some other computational models like pushdown and counter automata are examined in Section \ref{Sec:diger}. Section \ref{Sec:conc} contains a list of open problems.

\section{Bounds for real-time alternating Turing machines}\label{Sec:ATM}

Table \ref{table:lower-bounds} depicts the known tight bounds for the minimum space complexity of nonregular language recognition by various specializations of the one-way alternating TM model for both general and unary input alphabets. (We refer the reader to \cite{Sz94,MP95} for the proofs of the facts in the table.) These lower bounds clearly carry over to real-time TMs. We will now proceed to prove that most of these bounds remain tight, by exhibiting real-time machines recognizing nonregular languages within the corresponding space bounds.

\begin{table}[h!]	
	\vskip\baselineskip
	\caption{Minimum space used by one-way TMs for recognizing nonregular languages. All bounds are tight.}
	\footnotesize
	\begin{center}
	\begin{tabular}{|l|l|l|l|l|l|l|}
	 	\hline
	 	& \multicolumn{3}{c|}{General input alphabet} & \multicolumn{3}{c|}{Unary input alphabet} 
	 	\\ \hline  	
	 	& Strong & Middle & Weak & Strong & Middle & Weak 
	 	\\ \hline
	 	Deterministic TM & $ \log n $ & $ \log n $ & $ \log n $ & $ \log n $ & $ \log n $ & $ \log n $
	 	\\ \hline
	 	Nondeterministic TM & $ \log n $ & $ \log n $ & $ \log \log n $ & $ \log n $ & $ \log n $ & $ \log \log n $
	 	\\ \hline
	 	Alternating TM & $ \log n $ & $ \log \log n $ & $ \log \log n $ & $ \log n $ & $ \log n $ & $ \log \log n $
	 	\\ \hline
	 \end{tabular}
	 \end{center}
	 \label{table:lower-bounds}
\end{table}~

\begin{theorem}
	\label{theorem:log}
	All the logarithmic bounds in Table \ref{table:lower-bounds} remain 
	tight for the real-time versions of the corresponding machines.
\end{theorem}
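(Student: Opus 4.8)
The plan is to derive all thirteen logarithmic entries of the table from a single machine. A deterministic real-time TM is in particular a nondeterministic and an alternating real-time TM, and the strong space measure dominates the middle and weak ones; hence it suffices to exhibit one deterministic real-time TM recognizing a nonregular language within strong $O(\log n)$ space. If that language is moreover taken to be unary, the same example witnesses tightness in both the ``general'' and the ``unary'' columns at once. Since the lower-bound halves of these entries are inherited from the one-way case (a real-time TM is a one-way TM), only this constructive upper-bound half is at issue.

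The naive idea --- keeping on the worktape an ordinary binary counter of the number of input symbols read so far --- fails immediately: propagating a carry across the counter costs up to $\Theta(\log n)$ worktape steps, while real-time computation affords exactly one worktape step per input symbol. The remedy I would use is an \emph{amortized, spread-out binary counter}: the machine performs a single bit flip together with one head move per input symbol, so that an increment requiring $c$ carry operations is simply carried out over the next $c$ input symbols. Because the total number of elementary operations used in counting from $0$ up to $V$, including the head's return sweeps to the least significant digit, is $\Theta(V)$ (a standard amortized-counting estimate), after $n$ input symbols the counter stores a value $V=\Theta(n)$ and hence occupies only $O(\log n)$ cells. Crucially, this bound on the head position holds on \emph{every} input, accepted or not, which is precisely what the strong measure demands.

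What remains is to read a genuinely nonregular unary language off this counter. The machine carries one extra bit of finite-state control that is raised exactly on the step completing an increment whose carry runs off the top of the counter --- recognisable because there the carry meets a blank cell instead of a $0$ --- that is, exactly when the counter passes through a power of two. Reading the end-of-input marker on such a step causes acceptance; every other input is rejected. A short computation identifies the accepted lengths as a set of the form $\{\,c\cdot 2^{k}+d : k\ge 1\,\}$ for fixed constants $c,d$ depending on the implementation; such a set has unbounded gaps, so it is not eventually periodic and the language is nonregular.

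The one genuinely delicate point is the bookkeeping that keeps the whole routine exactly real-time: the carry propagation and the return-to-least-significant-digit sweep must be arranged as a fixed, input-independent cyclic procedure in which every phase consumes exactly one input symbol and the worktape head never crosses its starting cell, and one must then confirm that the resulting ``acceptance times'' really do form a set with unbounded gaps (the precise values of $c$ and $d$ are immaterial). Everything else --- the inclusions among machine types and among space measures, and the transfer of the lower bounds from the one-way setting --- is routine.
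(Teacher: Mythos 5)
Your proposal is correct and follows essentially the same route as the paper: a single deterministic real-time TM over a unary alphabet that maintains a logarithmic-length binary counter whose update steps are spread one-per-input-symbol, accepts exactly when the input ends as the counter reaches a power of two, and derives nonregularity from the unbounded gaps between accepted lengths, with the deterministic/unary/strong-space choice covering all thirteen entries at once. The only difference is an implementation detail --- the paper sweeps the full counter between its two delimiters on every increment (which makes the accepted lengths $k_{i+1}=k_i+2^i(i+1)+2$ easy to compute exactly), whereas you propagate each carry only as far as it goes --- and this does not change the argument.
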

\begin{proof}
We construct a real-time deterministic TM $ \mathcal{D} $ that recognizes the nonregular tally language 
\[
	L_{\mathcal{D}} = \{ a^{k_i} \mid i \geq 0, k_0 = 8, \mbox{and } k_{i+1} = k_i + 2^i(i+1)+2 \mbox{ for } i \geq 0 \},
\]
using only logarithmic space.

 The input string is assumed to be followed by the endmarker $\dollar$ on the tape. The  work tape symbols are $ \LofC $, $ \RofC $, $0$,  $1$, and the blank symbol $\#$. $ \mathcal{D} $ will maintain a counter in reverse binary representation between the delimiters  $ \LofC $ and $ \RofC $ on its work tape. At the beginning of the computation, $ \mathcal{D} $ expects to read four  $ a $'s from the input, (rejecting otherwise,) and moves the work tape head right and then left during these four steps to write the string $ \LofC 0 \RofC $, and position the head on the symbol $ \LofC $. Having initialized its counter, $ \mathcal{D} $ enters the following loop:
\begin{itemize}
	\item the work tape head goes from symbol $ \LofC $ to symbol $ \RofC $ 
		while incrementing the counter by one, and then,
	\item the work tape head goes from symbol $ \RofC $ to symbol $ \LofC $, while checking 
		whether the counter value is a power of two or not.
\end{itemize}
% During the travel of the work tape head, we assume that it cannot be stationary 
Notice that the work tape head never remains stationary on any symbol.		
Moreover, if required, the symbol $ \RofC $ on the work tape is shifted 
one square to the right.
If $ \mathcal{D} $ reads $ \dollar $ whenever the work tape head is not placed on the symbol $ \LofC $,
the input is rejected.
When the work tape head is placed on symbol $ \LofC $ and 
the currently scanned input symbol is $ \dollar $,
then $ \mathcal{D} $ accepts if the counter is a power of two, and rejects otherwise.

The shortest member of $L_{\mathcal{D}}$ is $ a^{8} $, since the work tape of $ \mathcal{D} $ contains $ \LofC 1 \RofC \#\#\cdots $,  and the work tape head is placed on the $ \LofC $ after reading the input $ a^{8} $.

For any $i>0$, suppose that  the work tape contains $ \LofC 0^{i-1} 1 \RofC \#\#\cdots $ with the head scanning the $ \LofC $ after reading the string $ a^{k_{i-1}} $ from the input. If the input head scans a $\dollar$ at this point, $ \mathcal{D} $ will accept. If the input is longer, the next opportunity for acceptance will come when $ \LofC 0^{i} 1 \RofC \#\#\cdots $ is written on the work tape, and the head is back on the $ \LofC $. To reach that configuration, the counter will have to be incremented $  2^i - 2^{i-1} =  2^{i-1}$ times. Each such increment involves the work tape head going all the way to the $\RofC $ and then back to the $ \LofC $.  Each of these round trips except the last one takes $ (i+1) + (i+1) $ steps, whereas the last one takes  $ (i+2) + (i+2) $ steps, since it involves extending the length of the counter by 1. The total number of steps between the two consecutive potentially accepting configurations is therefore $ (2^{i-1}-1) (2i+2) + 2i+4 = 2^i(i+1)+2 $.

%We will show thatafter $ \mathcal{D} $ has read the string $ a^{k_{i-1}} $, for all $ i>0 $. This is seen to be true for $i=1$ by simply simulating $\mathcal{D}$ for eight steps. To reach the configuration with 

%Let  After reading $ a^{k_{i-1}} $ and $ a^{k_{i}} $, the work tape contains $ \LofC 0^{i-1} 1 \RofC \#\#\cdots $ and $ \LofC 0^{i} 1 \RofC \#\#\cdots $, respectively and the work tape head is placed on $ \LofC $ in both cases. $ \mathcal{D} $  updates the configuration of worktape from former case to latter case in $ k_i - k_{i-1} $ steps. For all updates, the work tape head must go forth and back $ $, i.e. , times. So, $ k_i - k_{i-1} $ equals to .

%Let $ a_{i} $ be the $ i^{th} $ shortest element of $ L_{\mathcal{D}} $, where $ i > 0 $.
The fact that  $  k_{i}  -  k_{i-1}  $ increases as $ i $ increases leads to a trivial proof, using the pumping lemma for regular languages, of the nonregularity of $ L_{\mathcal{D}} $.

Since $ \mathcal{D} $ belongs to the most specialized machine class under consideration, and obviously uses $O(\log n)$ space to recognize a unary nonregular language, we conclude that all the logarithmic lower bounds in Table \ref{table:lower-bounds} are tight for real-time machines as well.
\end{proof}

We note that Freivalds and Karpinski use a real-time logarithmic-size counter in order to demonstrate a lower time bound for probabilistic TMs in \cite{FK95}. Such a counter could also be used to prove at least some of the bounds handled by Theorem \ref{theorem:log} to be tight. To the best of our knowledge, the minimum space requirements for nonregular language recognition by these real-time machines have not been studied before.

We will now turn our attention to the double logarithmic bounds for machines with general input alphabets in Table \ref{table:lower-bounds}. Our strategy is to modify the one-way machines which are used in demonstrating these results to obtain real-time machines with the same space complexity. The languages recognized by these new real-time machines will be ``padded" versions of their counterparts in the one-way setup. Essentially, the real-time machine will consume a padding symbol ($ \kappa $) for each step of the one-way machine in which the input head pauses on the tape.

\begin{definition}	
	Given a one-way deterministic, nondeterministic or alternating TM $ \mathcal{D} $,
the real-time TM $ \mathcal{D}_{\kappa} $ of the same model variant is constructed as follows:
	\begin{itemize}
		\item The input alphabet of $ \mathcal{D}_{\kappa} $ is $ \Sigma_{\kappa} = \Sigma \cup \{ \kappa \} $, where $ \Sigma $ is the input alphabet of $ \mathcal{D} $, and $ \kappa \notin \Sigma$; 
	\item The work tape alphabets of the two machines are identical;
	\item $ \mathcal{D}_{\kappa} $ skips over any prefix of $\kappa$'s in its input until it sees the first  non-$ \kappa $
			symbol, at which point it starts its emulation of $ \mathcal{D} $;
		\item On each input symbol other than $ \kappa $,
			$ \mathcal{D}_{\kappa} $ emulates the state transition and work tape action of $ \mathcal{D} $ for that symbol;
		\item After implementing a stationary transition (that is, one which does not move the input head,) of $ \mathcal{D} $ in the manner described above, $ \mathcal{D}_{\kappa} $ checks whether the next input symbol is  $ \kappa $. If this is the case, $ \mathcal{D}_{\kappa} $ implements the transition that $ \mathcal{D} $ would be making while scanning the symbol that
			is the last non-$ \kappa $ symbol scanned by $ \mathcal{D}_{\kappa} $,
			otherwise,
			$ \mathcal{D}_{\kappa} $ rejects the input;
		\item After implementing a transition of $ \mathcal{D} $ that moves its input head to the right, $ \mathcal{D}_{\kappa} $ skips over any $ \kappa $'s in the input without any state or work tape change, until it sees a non-$ \kappa $
			symbol.
	\end{itemize}
	At the end, $ \mathcal{D}_{\kappa} $ accepts or rejects as $ \mathcal{D} $ would, unless it has already rejected in the eventualities described above.
\end{definition}

Let $ h_{\kappa} : \Sigma_\kappa \rightarrow \Sigma $ be a homomorphism such that 
\begin{itemize}
	\item $ h_{\kappa}(x)=x $ if $ x \neq \kappa $, and,
	\item $ h_{\kappa}(\kappa)= \varepsilon $.
\end{itemize}

\begin{definition}
	$ L_{\mbox{-}\kappa} \subset \Sigma_{\kappa}^{*} $ is the language recognized by $ \mathcal{D}_{\kappa} $, where $ \mathcal{D} $ is a one-way TM recognizing the language $ L \subseteq \Sigma^{*} $.
\end{definition}
\begin{lemma}
	$ h_{\kappa} ( L_{\mbox{-}\kappa} ) = L $.
\end{lemma}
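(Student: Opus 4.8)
The plan is to prove the two set inclusions $h_\kappa(L_{\mbox{-}\kappa}) \subseteq L$ and $L \subseteq h_\kappa(L_{\mbox{-}\kappa})$ separately, in each case tracking the correspondence between a computation of $\mathcal{D}_\kappa$ on a string $w \in \Sigma_\kappa^*$ and the computation of $\mathcal{D}$ on $h_\kappa(w)$. The underlying fact, which I would state as a claim and establish by induction on the length of the computation, is this: on input $w$, the machine $\mathcal{D}_\kappa$ faithfully emulates $\mathcal{D}$ on input $h_\kappa(w)$ provided that the pattern of $\kappa$'s in $w$ is ``compatible'' with the input-head movements of $\mathcal{D}$ — that is, a maximal block of $\kappa$'s occurs in $w$ exactly at a position where $\mathcal{D}$ performs a (possibly empty) run of stationary transitions, with the number of $\kappa$'s in the block equal to the number of such stationary steps, and no $\kappa$'s occur elsewhere. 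If the pattern is incompatible, $\mathcal{D}_\kappa$ rejects (either because it expects a $\kappa$ after a stationary transition and does not see one, or because it sees a $\kappa$ where $\mathcal{D}$ would move its head right and then — after skipping the $\kappa$'s — the emulation desynchronizes, or rather, re-reading the construction: a $\kappa$ after a right-move is simply skipped, so the only failure mode is a missing $\kappa$ after a stationary step).

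For the inclusion $h_\kappa(L_{\mbox{-}\kappa}) \subseteq L$: take $w \in L_{\mbox{-}\kappa}$, so $\mathcal{D}_\kappa$ accepts $w$. By the construction, $\mathcal{D}_\kappa$ accepts only by reaching the point where ``$\mathcal{D}_\kappa$ accepts or rejects as $\mathcal{D}$ would'' without having rejected earlier; the claim above then tells us that the emulated computation is exactly the computation of $\mathcal{D}$ on $h_\kappa(w)$, hence $\mathcal{D}$ accepts $h_\kappa(w)$, i.e. $h_\kappa(w) \in L$. For the reverse inclusion $L \subseteq h_\kappa(L_{\mbox{-}\kappa})$: take $x \in L$, so $\mathcal{D}$ accepts $x$. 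Look at the accepting computation of $\mathcal{D}$ on $x$ and read off, for each input symbol of $x$, how many stationary steps $\mathcal{D}$ makes while scanning it; insert exactly that many $\kappa$'s immediately after that symbol (and handle the leading run of stationary moves on the first symbol, or whatever the construction's ``skip over any prefix of $\kappa$'s'' clause demands, by placing the corresponding $\kappa$'s at the front). Call the resulting string $w$. Then $h_\kappa(w) = x$, and by the claim $\mathcal{D}_\kappa$ on $w$ faithfully emulates $\mathcal{D}$ on $x$ and therefore accepts, so $w \in L_{\mbox{-}\kappa}$ and $x = h_\kappa(w) \in h_\kappa(L_{\mbox{-}\kappa})$.

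The main obstacle is making the emulation-correspondence claim precise and getting the bookkeeping exactly right at the boundary between blocks — in particular, handling the nondeterministic and alternating cases, where ``the computation'' is really a tree: there, the statement of the claim must be that the computation tree of $\mathcal{D}_\kappa$ on $w$ is isomorphic (after contracting the $\kappa$-skipping steps) to the computation tree of $\mathcal{D}$ on $h_\kappa(w)$, so that acceptance is preserved in both directions. One must also be careful that a stationary transition of $\mathcal{D}$ followed immediately by a head-moving transition is emulated correctly: $\mathcal{D}_\kappa$ consumes one $\kappa$ for the stationary step, then on the move step consumes the next (non-$\kappa$) symbol, which is consistent with placing the $\kappa$'s for a given symbol directly after it. A further subtlety, for alternating machines, is that different branches of the computation may pause the input head for different numbers of steps on the same symbol, so a single string $w$ with a fixed number of $\kappa$'s after each symbol cannot simultaneously match all branches; this would appear to break the construction, so presumably the intended reading is that the one-way machines used for the double-logarithmic bounds have the property that the number of stationary steps on each input symbol is determined by the input prefix alone (independent of the branch), or the definition tacitly restricts to such machines — I would flag this and argue the correspondence under that hypothesis, which is all that is needed for the application in the next section.
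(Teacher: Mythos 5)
Your two-inclusion strategy is the same as the paper's, and the forward inclusion $h_\kappa(L_{\mbox{-}\kappa}) \subseteq L$ is handled essentially as the paper does (the paper simply calls it obvious). The problem is in your treatment of the reverse inclusion for alternating machines, where you end up flagging a difficulty that is not actually there and proposing to weaken the lemma with an extra hypothesis (that the number of stationary steps per symbol be branch-independent). That hypothesis is not needed, and the lemma as stated is true in full generality. The gap comes from your correspondence claim, which demands that each maximal block of $\kappa$'s in $w$ have length \emph{exactly equal} to the number of stationary steps $\mathcal{D}$ performs there. That is too strict: by the last clause of the construction, after a transition of $\mathcal{D}$ that moves the input head right, $\mathcal{D}_\kappa$ silently skips any number of remaining $\kappa$'s. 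So the true requirement is only a lower bound --- each block of $\kappa$'s must be \emph{at least} as long as the run of stationary steps that the branch in question performs on the preceding symbol; surplus $\kappa$'s are harmless. You in fact notice this parenthetically (``a $\kappa$ after a right-move is simply skipped, so the only failure mode is a missing $\kappa$''), but you do not carry that observation into the final paragraph, where it is precisely what resolves the alternation issue.

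The paper's proof uses exactly this slack: given an accepting computation tree of $\mathcal{D}$ on $w$, let $t$ be the \emph{maximum} number of steps $\mathcal{D}$ pauses on any single symbol over all branches of that (finite) tree, and let $w'$ be obtained by inserting $t$ $\kappa$'s after every symbol of $w$. Then every branch finds at least as many $\kappa$'s as it needs after each symbol, consumes what it needs via the stationary-step clause, and skips the rest after its right-move, so the whole tree is reproduced by $\mathcal{D}_\kappa$ on $w'$ and $w = h_\kappa(w') \in h_\kappa(L_{\mbox{-}\kappa})$. Your per-symbol, per-branch exact padding is fine for the deterministic case, but the uniform maximum plus the skipping rule is what makes the nondeterministic and alternating cases go through without any added assumption.
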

\begin{proof}	
	It is obvious that $ h_{\kappa} ( L_{\mbox{-}\kappa} ) \subseteq L $. Assume that a string  $ w \in \Sigma^{*} $ is accepted by the one-way alternating TM $ \mathcal{D} $. $ \mathcal{D} $ may spend different amounts of time, pausing on different symbols, in different branches of its accepting computation tree. Let $t$ be the maximum number of steps that $ \mathcal{D} $ pauses for on any particular symbol in this tree. The string $w'$, obtained by inserting $t$ $\kappa$'s after every symbol of $w$, is accepted by  $ \mathcal{D}_{\kappa} $.
\end{proof}

\begin{remark}
	If $ L $ is not a member of class $ \mathcal{C} $ that is closed under homomorphism,
	then $ L_{\mbox{-}\kappa} $ is also not a member of $ \mathcal{C} $. In particular, if $L$ is nonregular, then $ L_{\mbox{-}\kappa} $ is also nonregular.
\end{remark}

\begin{theorem}\label{theorem:strong}
	If $ L $ is recognized by a  strongly (resp., middle) space $ s(n) $-bounded one-way TM  $ \mathcal{D} $, then for any nondecreasing function $ t(n) $ such that $ s(n) \in O(t(n)) $, the real-time TM $ \mathcal{D}_{\kappa} $ recognizing
	$ L_{\mbox{-}\kappa} $ is strongly  (resp., middle) $ O(t(n)) $-space bounded.
\end{theorem}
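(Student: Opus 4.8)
The plan is to follow the head of $\mathcal{D}_{\kappa}$'s work tape step by step and compare it with that of $\mathcal{D}$. Fix an input $w'\in\Sigma_{\kappa}^{*}$ with $|w'|=n$, and let $m=|h_{\kappa}(w')|$, so $m\le n$. The key observation, read off directly from the construction, is that every move of $\mathcal{D}_{\kappa}$ is of one of two kinds: either a faithful copy of some move of $\mathcal{D}$ (on the non-$\kappa$ symbol currently being emulated, or on the last non-$\kappa$ symbol when a stationary move is being re-run after a $\kappa$), or a $\kappa$-bookkeeping move (skipping an initial run of $\kappa$'s, skipping $\kappa$'s after a rightward input move of $\mathcal{D}$, or performing the ``is the next symbol $\kappa$?'' test after a stationary move). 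By construction the bookkeeping moves change neither the work-tape contents nor the work-head position. Hence the work-tape actions of $\mathcal{D}_{\kappa}$ on $w'$ are exactly those of $\mathcal{D}$ on $h_{\kappa}(w')$ (with its endmarker), performed in the same order but with do-nothing pauses interspersed. Consequently every work-tape configuration (head position together with tape contents) reachable by $\mathcal{D}_{\kappa}$ on $w'$ is also reachable by $\mathcal{D}$ on $h_{\kappa}(w')$; this remains true even when $\mathcal{D}_{\kappa}$ aborts early in one of the rejecting eventualities, since by then it has merely reproduced a prefix of $\mathcal{D}$'s computation.

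Granting this, the rest is short. In the strong case, the claim above holds for arbitrary $w'$, and $\mathcal{D}$ being strongly $s(n)$-bounded puts every work-tape configuration reachable on the length-$m$ input $h_{\kappa}(w')$ within $s(m)$ cells of the origin; since $\mathcal{D}$ never goes left of its initial cell, the work head of $\mathcal{D}_{\kappa}$ on $w'$ stays within $s(m)$ cells as well. In the middle case, we restrict to $w'\in L_{\mbox{-}\kappa}$; by the Lemma, $h_{\kappa}(w')\in L$, so $h_{\kappa}(w')$ is an accepted input of $\mathcal{D}$, and the same bound $s(m)$ applies to all configurations $\mathcal{D}_{\kappa}$ reaches on the accepted input $w'$. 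To turn the bound $s(m)$, where $m$ may be far smaller than $n$, into one of the required form, use the hypotheses on $t$: fix $c$ and $n_{0}$ with $s(k)\le c\,t(k)$ for all $k\ge n_{0}$. When $m\ge n_{0}$, monotonicity of $t$ and $m\le n$ give $s(m)\le c\,t(m)\le c\,t(n)$; for the finitely many $m<n_{0}$, $s(m)$ is at most an absolute constant, which is again $O(t(n))$ when $t$ is unbounded (and if $t$ is bounded then $s\in O(1)$ and the statement is immediate). Thus the work head of $\mathcal{D}_{\kappa}$ is $O(t(n))$ cells from its origin on every input in the strong case, and on every accepted input in the middle case, which is precisely the assertion.

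The one place that needs care, and the step I expect to be the main obstacle, is the first paragraph's claim that no auxiliary behaviour of $\mathcal{D}_{\kappa}$ ever pushes the work head past where $\mathcal{D}$ would go: one must check the early-rejection transitions, and, in the alternating case, the way these interact with the branching structure, confirming that $\mathcal{D}_{\kappa}$ introduces only states that leave the work tape untouched and only prunes or reproduces branches of $\mathcal{D}$'s computation tree rather than extending them. Once the containment of reachable work-tape configurations is in hand, everything else is the routine monotonicity bookkeeping above; indeed, the reason the theorem is stated with an auxiliary nondecreasing $t$, rather than simply claiming that $\mathcal{D}_{\kappa}$ is strongly (resp.\ middle) $s(n)$-space bounded, is exactly that $h_{\kappa}$ can shrink the input while $s$ is not assumed to be monotone.
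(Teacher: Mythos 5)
Your proposal is correct and follows essentially the same route as the paper, whose entire proof is the single observation that the space used by $\mathcal{D}_{\kappa}$ on any padded input is at most the space used by $\mathcal{D}$ on its $h_{\kappa}$-image. Your elaboration of why the $\kappa$-bookkeeping moves leave the work tape untouched, and the monotonicity argument converting the bound $s(m)$ on the shorter unpadded input into $O(t(n))$, simply fill in details the paper leaves implicit (and indeed explain why the auxiliary nondecreasing $t$ appears in the statement).
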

\begin{proof}	
	For any $ w \in \Sigma^{*} $, the space used by $ \mathcal{D}_{\kappa} $ on any input which is a 
	member of $ \{ w_{\kappa} \in \Sigma_{\kappa}^{*} \mid h_{\kappa} ( w_{\kappa} ) = w \} $
	is at most equal to the space used by $ \mathcal{D} $ on $ w $.
\end{proof}

%\begin{theorem}\label{theorem:middle}
%	If $ L $ is recognized by a  middle space $ s(n) $-bounded one-way TM  $ \mathcal{D} $, then for any nondecreasing function $ t(n) $ such that $ s(n) \in O(t(n)) $, the real-time TM $ \mathcal{D}_{\kappa} $ recognizing
%	$ L_{\mbox{-}\kappa} $ is middle  $ O(t(n)) $-space bounded.
%\end{theorem}
%\begin{proof}
%	Similar to the proof of Theorem \ref{theorem:strong}.
%\end{proof}

Consider the nonregular language 

$ L_{gcm} = \{ a^{m} b^{M} \mid M \mbox{ is a common multiple of all } i \leq m \} $.

Szepietowski \cite{Sz88} showed that $ L_{gcm} $ is recognized by a middle space $ \log \log (n) $-bounded one-way alternating TM that we will call $ \mathcal{A} $.

\begin{corollary}
    The real-time alternating TM $ \mathcal{A}_{\kappa} $ recognizing the nonregular language
	$ L_{gcm \mbox{-}\kappa} $ is middle $ O(\log\log(n)) $-space bounded.
\end{corollary}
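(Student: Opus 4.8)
The plan is to derive the corollary as a direct instantiation of Theorem~\ref{theorem:strong} together with the preceding Remark and Szepietowski's cited result. First I would note that, by Szepietowski's construction, $ L_{gcm} $ is recognized by a one-way alternating TM $ \mathcal{A} $ that is middle $ \log\log(n) $-space bounded, and that $ \mathcal{A} $ therefore satisfies the hypotheses of the Definition of $ \mathcal{D}_{\kappa} $; hence $ \mathcal{A}_{\kappa} $ is a well-defined real-time alternating TM, and by the definition of $ L_{\mbox{-}\kappa} $ it recognizes $ L_{gcm \mbox{-}\kappa} $.

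Next I would apply the ``middle'' case of Theorem~\ref{theorem:strong} with $ \mathcal{D} = \mathcal{A} $, $ s(n) = \log\log(n) $, and the choice $ t(n) = \log\log(n) $. This $ t $ is nondecreasing and trivially satisfies $ s(n) \in O(t(n)) $, so the theorem immediately gives that $ \mathcal{A}_{\kappa} $ is middle $ O(\log\log(n)) $-space bounded. The one point deserving a moment's attention is that the space bound for $ \mathcal{A}_{\kappa} $ is stated in terms of the length of its padded input, which may be strictly larger than the length of the unpadded string fed to $ \mathcal{A} $; since $ \log\log $ is nondecreasing, a bound in terms of the shorter length is also valid in terms of the longer one, which is precisely what the proof of Theorem~\ref{theorem:strong} exploits. (Explicitly: on an accepted input $ w_{\kappa} $ of $ \mathcal{A}_{\kappa} $, the string $ w = h_{\kappa}(w_{\kappa}) $ is accepted by $ \mathcal{A} $ by the Lemma, so $ \mathcal{A} $ scans at most $ \log\log(|w|) $ work-tape cells on $ w $, whence $ \mathcal{A}_{\kappa} $ scans at most $ \log\log(|w|) \le \log\log(|w_{\kappa}|) $ cells on $ w_{\kappa} $.)

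For nonregularity I would simply invoke the Remark: the class of regular languages is closed under homomorphism and $ h_{\kappa}(L_{gcm \mbox{-}\kappa}) = L_{gcm} $ by the Lemma, so regularity of $ L_{gcm \mbox{-}\kappa} $ would force regularity of $ L_{gcm} $, contradicting Szepietowski's result. I do not anticipate any genuine obstacle here: all the work resides in Theorem~\ref{theorem:strong}, the Lemma, and the Remark, and the corollary is just their composition with the known fact about $ \mathcal{A} $. The only thing to be careful about is the bookkeeping of which input length the logarithms are taken of, as indicated above.
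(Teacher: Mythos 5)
Your proposal is correct and matches the paper's (implicit) argument exactly: the corollary is presented as an immediate instantiation of Theorem~\ref{theorem:strong} with $\mathcal{D}=\mathcal{A}$ and $s(n)=t(n)=\log\log(n)$, with nonregularity following from the Remark on closure under homomorphism. Your extra remark about why the nondecreasing condition on $t(n)$ handles the discrepancy between padded and unpadded input lengths is precisely the point the theorem's hypothesis is designed to absorb.
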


We are therefore able to state that the middle and strong space bounds for alternating TMs with general input alphabets in Table \ref{table:lower-bounds} are also tight for the real-time versions of these machines.

\begin{remark}\label{remark:weak}
There is no analogue of Theorem \ref{theorem:strong} %and \ref{theorem:middle}
 for weak space, because some accepting path may use more space but use less time (i.e. perform fewer stationary moves) than the accepting path with the best space usage in the one-way machine, and the padding process described above may then yield some strings whose only accepting paths use unacceptably large amounts of  space in the real-time machine. The cases covered in Theorem \ref{theorem:strong} avoid this problem by requiring all computations in consideration to remain within the space bound.
\end{remark}

To examine the weak space bound case for nondeterministic machines with general input alphabets, we focus on the one-way machine $ \mathcal{N} $, which recognizes $ L_{j \neq k} = \{ a^{j}b^{k} \mid j \neq k \} $, \cite{Sz94} (Lemma 4.1.3 on Page 23) used to establish the corresponding lower bound in Table \ref{table:lower-bounds}:
The work tape of $ \mathcal{N} $ is divided into three tracks. At the beginning of the computation, $ \mathcal{N} $ nondeterministically selects a number $ l > 1 $, and writes it in binary on the top track. Any string containing an $a$ after a $b$ is rejected. For strings of the form $a^{r}b^{s}$, $ \mathcal{N} $ calculates and stores the value of $ r \mod(l) $ (resp., $ s \mod(l) $) on 
the middle track (resp., bottom track).
When the end-marker is read, the numbers in the middle and bottom tracks are compared.
If they are not equal, the input is accepted.
The nondeterministic path corresponding to $ l $, namely $ \mathsf{npath}_{l} $,
needs only $ \Theta(\log(l)) $ space.

If $r = s$, then $ r \equiv s \mod(l) $ for all $ l > 1 $, and the input would be rejected. Using a number-theoretical fact (\cite{Sz94}, page 22) which states that, whenever $ r \neq s$, there exists a number  $ l \in O(\log(r+s)) $ such that $ r \not\equiv s \mod(l) $, one concludes that $ \mathcal{N} $ is weakly $ \log\log(n) $-space bounded, and recognizes the nonregular language $ L_{j \neq k} $. %  = \{ a^{j}b^{k} \mid j \neq k \} $.

We will modify $ \mathcal{N} $ to obtain a real-time TM recognizing a padded version of $ L_{j \neq k}$, taking extra care to handle the additional complications of weak space bounds mentioned in the above remark. We do this by making sure that nondeterministic paths with larger values of $l$ have greater runtimes than those with small values of $l$.

We rewrite the program of $ \mathcal{N} $ such that $ d_{l} $, that is, the number of stationary steps performed during the processing of a single input symbol by the nondeterministic path containing $l$ on the top track, depends solely on $l$.
For example, we can use the following strategy:
\begin{itemize}
	\item The work tape head is always placed on the leftmost nonblank symbol before reading the next input symbol, and,
	\item The operations on the work tape are performed by sweeping the work tape head,
		which operates at the speed of one cell per step,
		between the leftmost and rightmost nonblank symbols $ c > 0 $ times.
\end{itemize}
By selecting a sufficiently large value for $ c $,
$ d_{l} $ can be set to $ c \lceil \log l \rceil + k $ for nondeterministic path $ \mathsf{npath}_{l} $,
where the exact value of $ k \in \mathbb{Z} $ depends on the way the numbers are stored on the work tape.
Thus, we can guarantee that for any $ l' > l $,
the number of the stationary steps (on the input tape) 
of $ \mathsf{npath}_{l'} $ cannot be less than that of $ \mathsf{npath}_{l} $.
Let $ \mathcal{N}' $ be the one-way nondeterministic TM implementing this modified algorithm,
and $ L^{'}_{(j \neq k)\mbox{-}\kappa} $ be the language recognized by $ \mathcal{N}_{\kappa}' $. 
\begin{lemma}
 The real-time nondeterministic TM $ \mathcal{N}_{\kappa}' $ recognizing the nonregular language
	$ L^{'}_{(j \neq k)\mbox{-}\kappa} $  is weakly $ O(\log\log(n)) $-space bounded.
\end{lemma}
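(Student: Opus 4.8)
The plan is to establish the space bound straight from the definition of weak space: for every input accepted by $\mathcal{N}_{\kappa}'$ I would exhibit an accepting computation confined to $O(\log\log n)$ cells, namely the one emulating the nondeterministic path of $\mathcal{N}'$ that writes the \emph{smallest} admissible value of $l$ on its top track, and then I would bound that smallest value logarithmically in $n$. Nonregularity of $L^{'}_{(j \neq k)\mbox{-}\kappa}$ is already settled, being inherited from $L_{j\neq k}$ through $h_\kappa$, and $\mathcal{N}_{\kappa}'$ is real-time by construction, so only the weak space bound is at issue.

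Concretely, I would fix an accepted input $w_\kappa$ with $|w_\kappa|=n$ and write $h_\kappa(w_\kappa)=a^j b^k$; because $\mathcal{N}'$ recognizes $L_{j\neq k}$ and $h_\kappa(L^{'}_{(j \neq k)\mbox{-}\kappa})=L_{j\neq k}$, we have $j\neq k$, and the number $N=j+k$ of non-$\kappa$ symbols of $w_\kappa$ satisfies $N\le n$. The structural fact I would read off from the definition of $\mathcal{D}_{\kappa}$ is this: the emulation of the path $\mathsf{npath}_l$ of $\mathcal{N}'$ survives on $w_\kappa$ (never triggering the rejection that occurs when a $\kappa$ is expected but not found) exactly when every maximal block of $\kappa$'s separating two consecutive non-$\kappa$ symbols of $w_\kappa$ is at least as long as $d_l$, the number of stationary steps $\mathsf{npath}_l$ performs per input symbol (a quantity that, by the rewriting, depends only on $l$), since any surplus $\kappa$'s are skipped over. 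Thus $w_\kappa$ is accepted along $\mathsf{npath}_l$ if and only if $j\not\equiv k\pmod l$ and $g\ge d_l$, where $g$ is the length of the shortest such block of $\kappa$'s in $w_\kappa$. I would then let $\hat l$ be the least integer $\ge 2$ with $j\not\equiv k\pmod{\hat l}$, which exists since $j\neq k$. Because $w_\kappa$ is accepted, there is some $l$ with $j\not\equiv k\pmod l$ and $g\ge d_l$; since $\hat l\le l$ and $d_l$ is nondecreasing in $l$ --- precisely the monotonicity property engineered into $\mathcal{N}'$ --- we get $g\ge d_l\ge d_{\hat l}$, so $\mathsf{npath}_{\hat l}$ likewise yields an accepting computation of $\mathcal{N}_{\kappa}'$ on $w_\kappa$. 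Finally, the number-theoretic fact invoked for $\mathcal{N}$ supplies a witness $l^{*}\in O(\log(j+k))$ with $j\not\equiv k\pmod{l^{*}}$, whence $\hat l\le l^{*}=O(\log N)=O(\log n)$; since $\mathsf{npath}_{\hat l}$ uses $\Theta(\log\hat l)$ cells of a work tape whose alphabet is untouched by the padding, this accepting computation runs in $O(\log\log n)$ space, as required.

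The genuinely delicate point --- and the whole reason $\mathcal{N}$ was rewritten as $\mathcal{N}'$ --- is the monotonicity argument above, which is exactly what dodges the pitfall flagged in Remark~\ref{remark:weak}: naively padding a one-way weak-space machine can create strings whose only surviving accepting computations are space-expensive, because a space-cheap accepting branch of a weak-space machine need not be the slowest one. Forcing $d_l$ to depend only on $l$ and to grow with $l$ guarantees that whenever \emph{any} path accepts $w_\kappa$ within the $\kappa$-budget that $w_\kappa$ actually provides, so does the cheapest one, $\mathsf{npath}_{\hat l}$. Correspondingly, the step demanding the most care is not the argument above but the verification (sketched in the passage preceding the lemma) that the repeated-sweep rewriting really does preserve both the language $L_{j\neq k}$ and the $\Theta(\log l)$ space used by each path $\mathsf{npath}_l$; with that in hand, the reasoning goes through as described.
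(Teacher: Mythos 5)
Your proposal is correct and follows essentially the same route as the paper's (much terser) proof: identify the least modulus $\hat{l}$ with $j\not\equiv k \pmod{\hat{l}}$, use the engineered monotonicity of $d_l$ to conclude that $\mathsf{npath}_{\hat{l}}$ survives whatever $\kappa$-padding sufficed for the path that actually accepted, and bound $\hat{l}\in O(\log n)$ via the number-theoretic fact so that this path uses $O(\log\log n)$ space. You have merely made explicit the monotonicity and padding-sufficiency steps that the paper leaves implicit.
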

\begin{proof}
	Let $ w = a^{r}b^{s} $ be a member of $ L_{j \neq k} $, and $ l $ be the smallest
	number satisfying  $ r \not\equiv s \mod (l) $. 
	Then, for all $ w_{\kappa} \in L^{'}_{(j \neq k) \mbox{-}\kappa} $ satisfying $ h_{\kappa} (w_{\kappa}) = w $,
	the nondeterministic path of $ \mathcal{N}_{\kappa}' $ corresponding to $ l $  accepts the input.
\end{proof}

Table \ref{table:tight-bounds} summarizes the results presented in this section.

\begin{table}[h!]	
	\vskip\baselineskip
	\caption{Lower space bounds for nonregular language recognition by real-time TMs. The bounds we have shown to be tight are in boldface.}
	\footnotesize
	\begin{center}
	\begin{tabular}{|l|l|l|l|l|l|l|}
	 	\hline
	 	& \multicolumn{3}{c|}{General input alphabet} & \multicolumn{3}{c|}{Unary input alphabet} 
	 	\\ \hline  	
	 	& Strong & Middle & Weak & Strong & Middle & Weak 
	 	\\ \hline
	 	Deterministic TM & $ \mathbf{log} \mspace{2mu} n $ & $ \mathbf{log} \mspace{2mu} n $ &
	 		$ \mathbf{log} \mspace{2mu} n $ & $ \mathbf{log} \mspace{2mu} n $ &
	 		$ \mathbf{log} \mspace{2mu} n $ & $ \mathbf{log} \mspace{2mu} n $
	 	\\ \hline
	 	Nondeterministic TM & $ \mathbf{log} \mspace{2mu} n $ & $ \mathbf{log} \mspace{2mu} n $ &
	 		$ \mathbf{log} \mspace{2mu} \mathbf{log} \mspace{2mu} n $ & $ \mathbf{log} \mspace{2mu} n $ &
	 		$ \mathbf{log} \mspace{2mu} n $ & $ \log \log n $
	 	\\ \hline
	 	Alternating TM & $ \mathbf{log} \mspace{2mu} n $ & $ \mathbf{log} \mspace{2mu} \mathbf{log} \mspace{2mu} n $ &
	 		$ \mathbf{log} \mspace{2mu} \mathbf{log} \mspace{2mu} n $ & $ \mathbf{log} \mspace{2mu} n $ &
	 		$ \mathbf{log} \mspace{2mu} n $ & $ \log \log n $
	 	\\ \hline 
	 \end{tabular}
	 \end{center}
	 \label{table:tight-bounds}
\end{table}

\section{Other real-time computational models}\label{Sec:diger}

The question of the minimum amount of useful space is also interesting for more restricted models, like pushdown or counter automata. The following facts are known about (one-way) pushdown automata (PDAs) with unrestricted pushdown alphabets: 

\begin{fact}\label{fact:Gabarro}
 All deterministic PDAs which recognize nonregular languages are weakly $\Theta(n)$-space bounded \cite{Ga84}.
\end{fact}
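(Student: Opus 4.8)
The statement has two halves, and only the lower one has real content.

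\medskip
\noindent\emph{Upper bound.}
Put the DPDA $ M $ into the standard normal form in which each move changes the stack height by at most one and in which only a bounded number of consecutive input-stationary moves can occur before the input head advances: an input-stationary subcomputation that loops, or that keeps pushing, can never belong to an accepting run and may be detected and cut off. Then the stack grows by $ O(1) $ per input symbol, so on every input of length $ n $ — accepted or not — it stays within height $ O(n) $, and $ M $ is (strongly, hence weakly) $ O(n) $-space bounded.

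\medskip
\noindent\emph{Lower bound via a gap lemma.}
I would prove the contrapositive as a gap statement: if $ M $ uses stack height $ o(n) $ on its accepted inputs, then it uses only $ O(1) $ height on them and $ L(M) $ is regular. Granting this, a DPDA for a nonregular language cannot have accepting-run space $ o(n) $, which together with the $ O(n) $ bound forces $ \Theta(n) $. The bounded-stack half of the lemma is easy: since $ M $ is deterministic and reads left to right, the stack held after a prefix $ u $ depends on $ u $ alone, so the moment its height exceeds the constant $ B $ that bounds all accepting runs we know that no extension of $ u $ is accepted; have $ M $ reject there, obtaining an equivalent DPDA whose stack is always at most $ B $ — essentially a DFA over $ Q \times \{\, \gamma : |\gamma| \le B \,\} $ — so $ L(M) $ is regular.

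\medskip
\noindent\emph{The main obstacle.}
The substance is excluding a DPDA whose accepting-run stack height is unbounded yet $ o(n) $; this is the step I expect to be hardest, and it is essentially the technical core of \cite{Ga84}. The plan: from unboundedness, pick accepted words on which the stack climbs arbitrarily high, and on the ascent of such a computation isolate a staircase of $ H - O(1) $ clean push moves, each carrying one of the $ |Q|\cdot|\Gamma| $ surface configurations; two rungs must repeat one, which yields an ascending pumpable segment that reads a nonempty input block (nonempty, else iterating it yields a nonterminating push, excluded by the normal form), raises the stack, and returns to its surface configuration without disturbing anything below its starting level. The delicate part is the descending counterpart: because $ M $ is deterministic one cannot iterate the ascending segment on its own — the later unwinding of the stack would no longer match — so one must, via the pumping lemma for deterministic pushdown automata, extract a matching descending segment and pump the two together, producing genuine accepted words of length $ \Theta(m) $ whose stack reaches height $ \Theta(m) $ for every $ m $. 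This contradicts $ o(n) $ and establishes the dichotomy ``$ O(1) $ (hence regular) or $ \Theta(n) $''; since no step used any property of the input alphabet, the same conclusion holds for unary inputs. I would emphasize that a purely combinatorial count of bounded-stack configurations cannot replace this argument — such a count is consistent with, say, logarithmic stack height — so the impossibility of intermediate growth rates is a genuinely structural, determinism-dependent phenomenon, and it is here that the matched-pumping construction of \cite{Ga84} must be carried out in full.
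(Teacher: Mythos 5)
The paper offers no proof of this Fact at all --- it is stated as a known result and delegated entirely to \cite{Ga84} --- so there is nothing internal to compare your argument against. Judged on its own terms, your architecture is the standard (and correct) one: the $O(n)$ upper bound via the observation that a deterministic machine whose stack climbs more than $|Q|\cdot|\Gamma|$ levels within a single $\varepsilon$-block must repeat a surface configuration at record heights and hence loop forever without reading input; the reduction ``bounded stack $\Rightarrow$ regular'' via simulation by a finite automaton over $Q \times \Gamma^{\le B}$, which is sound precisely because determinism makes the stack after a prefix independent of the continuation; and the gap lemma ``$o(n)$ accepted-run height $\Rightarrow$ $O(1)$'' as the contrapositive engine. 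Your logical assembly of these pieces into the $\Theta(n)$ conclusion is also correct.

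However, as a self-contained proof the proposal has a genuine gap, and it sits exactly where you say it does: the exclusion of unbounded-but-sublinear stack growth. Your staircase argument correctly produces an ascending segment $v$ that returns to its surface configuration while raising the stack, but the step from there to ``genuine accepted words of length $\Theta(m)$ whose stack reaches height $\Theta(m)$'' is only asserted. The difficulty is that after pumping $v$, the accepting suffix of the original word may pop below the pumped region and then read stack contents that have changed, so one must exhibit, for every $k$, an actual accepted continuation of $uv^k$ of length $O(k)$; invoking ``the pumping lemma for deterministic pushdown automata'' by name does not supply the matched descending segment or verify that the paired iteration preserves membership. Since you explicitly defer this core to \cite{Ga84} --- which is precisely what the paper itself does --- the proposal should be read as a correct high-level reconstruction of the structure of Gabarr\'o's theorem with an honest, well-localized placeholder at its technical heart, rather than as an independent proof.
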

\begin{fact}
	There exists a weakly $\log n$-space bounded nondeterministic PDA that recognizes a nonregular language \cite{Re07}.
\end{fact}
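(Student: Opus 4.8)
The plan is to exhibit one nonregular context-free language $L$ together with a nondeterministic PDA $M$ that recognizes it, and to argue that every $w \in L$ has at least one accepting computation of $M$ in which the stack never grows beyond $O(\log|w|)$ cells; since a PDA's only storage is its stack, this is exactly the assertion that $M$ is weakly $\log n$-space bounded. Nonregularity of $L$ will be the easy ingredient, obtained by a direct pumping-lemma argument; the work is all in the machine and its stack discipline.

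The key device is a counter of magnitude up to $N$ stored in only $O(\log N)$ stack cells. One cannot simply use an ordinary binary counter: with the least significant digit on top, an increment may start a carry that ripples arbitrarily far down into the stack, forcing $M$ to pop a run of digits whose length it would then have to remember in its finite control --- impossible once that length exceeds a constant. The remedy is to store the counter in a \emph{redundant} positional system --- a suitably represented skew-binary counter is one such choice --- in which ``add $1$'' is a constant-time operation that rewrites, pops, and pushes only a bounded number of cells at the very top of the stack, so that the height stays $\Theta(\log N)$ throughout while the stack contents still determine $N$. The generous (``unrestricted'') pushdown alphabet is what lets each cell carry a digit together with whatever bounded bookkeeping the representation needs. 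The language $L$ is then engineered around this device: an accepting run consists of building such a counter while scanning one part of the input and checking a matching condition against a later part, with nondeterminism used both to pick which portion of a ``self-describing'' input serves as the length certificate and, where a value must be verified, to guess its redundant representation so that the check, too, proceeds purely by inspecting the top of the stack.

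The main obstacle is precisely this LIFO constraint. Besides the carry problem just mentioned, a naive ``decrement-and-compare'' verification suffers the mirror-image borrow problem, so the construction must be arranged so that the counter is only ever \emph{grown} along the intended accepting path, with nondeterminism supplying in advance any value that later has to be read back and unwound digit by digit, and never requiring a deep modification of the stack. One must finally check two further things: that the machine is only weakly --- not strongly --- $\log n$-space bounded, i.e.\ that the space-wasteful computation branches can simply be ignored because some thrifty accepting branch always exists for every member of $L$; and that $L$ is indeed nonregular. With those in place the Fact follows.
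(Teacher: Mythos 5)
This Fact is not proved in the paper at all: it is imported verbatim from Reinhardt~\cite{Re07}, so there is no in-paper argument to compare yours against, and your sketch has to stand on its own. As it stands it does not. First, you never actually define the language $L$ or the machine $M$; ``the language is then engineered around this device'' defers exactly the part of the construction where both the nonregularity proof and the weak $\log n$ bound must come from (what is the counter counting, what is it compared against in the input, and why does a thrifty accepting branch exist for \emph{every} member?). A Fact of the form ``there exists a PDA such that\dots'' is not established until a concrete witness is on the table.

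Second, and more seriously, the one concrete technical claim you do make --- that a skew-binary (or similar redundant) counter supports increments that ``rewrite, pop, and push only a bounded number of cells at the very top of the stack'' --- is not justified and is false for skew binary as usually defined. The increment rule modifies the lowest \emph{nonzero} digit, which, with the least significant digit on top, may sit below an unbounded run of zeros; a PDA would have to pop that run, perform the update, and then push the same number of zeros back, which requires remembering an unbounded count in its finite control. So the LIFO obstacle you correctly identify in your second paragraph is not actually overcome by the device you propose in your first. Some further idea is needed --- for example, encoding the counter's evolution into the input string itself so that the stack only ever verifies local consistency between adjacent blocks (compare the two-stack machine for $L_{even\mbox{-}rev\mbox{-}bins}$ in Theorem~\ref{theorem:2pda}), with nondeterminism selecting which consistency checks to perform on a single stack. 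Until the representation issue and the definition of $L$ are pinned down, the argument has a genuine gap.
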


Counter automata \cite{FMR67} (essentially, PDAs with a unary pushdown alphabet,) do not seem to have been investigated deeply. As an easy corollary of Fact \ref{fact:Gabarro}, we have
\begin{fact}
	$o(n)$-space bounded real-time deterministic counter automata cannot recognize nonregular languages.
\end{fact}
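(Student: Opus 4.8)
The plan is to obtain the statement directly from Fact~\ref{fact:Gabarro} by a model-containment argument combined with the monotonicity of the three space modes, with no pumping-style or ad hoc reasoning required. The first step is to record the (routine) observation that a real-time deterministic counter automaton is, in particular, a one-way deterministic pushdown automaton: the counter is a pushdown store over a one-element stack alphabet (together with a bottom marker, if convenient), and every real-time machine is a fortiori a one-way machine. Under this identification the notion of space used by the counter automaton in the sense of this paper --- the distance of the worktape head from its origin, which for a counter is simply the current counter value --- coincides, up to an additive constant, with the stack height of the associated PDA, so the two space measures agree asymptotically.

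The second step is the argument by contradiction. Suppose some real-time deterministic counter automaton $\mathcal{M}$ recognizes a nonregular language $L$ while being $o(n)$-space bounded. Since strong boundedness implies middle boundedness, which in turn implies weak boundedness, whichever of the three modes is intended, $\mathcal{M}$ is in particular weakly $s(n)$-space bounded for some $s(n)\in o(n)$. But viewing $\mathcal{M}$ as a one-way deterministic PDA that recognizes the nonregular language $L$, Fact~\ref{fact:Gabarro} forces $\mathcal{M}$ to be weakly $\Theta(n)$-space bounded, hence not weakly $o(n)$-space bounded --- a contradiction. Therefore $L$ must be regular, which is the claim.

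The only point that calls for any care --- and hence the ``main obstacle'', modest as it is --- is the bookkeeping in the first step: one must verify that the inclusion of real-time deterministic counter automata into one-way deterministic PDAs is faithful with respect to the space measure (so that $o(n)$ counter space really does translate into $o(n)$ stack space) and that Fact~\ref{fact:Gabarro}, as stated for one-way DPDAs with unrestricted pushdown alphabets, applies to the restricted unary-stack, real-time special case. Both checks are immediate, so the corollary is genuinely easy.
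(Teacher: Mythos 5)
Your proposal is correct and matches the paper's (implicit) argument: the paper states this fact as "an easy corollary" of Fact~\ref{fact:Gabarro}, relying on exactly the reduction you spell out — a real-time deterministic counter automaton is a special case of a one-way deterministic PDA with counter value playing the role of stack height, so Gabarr\'o's weak $\Theta(n)$ lower bound rules out $o(n)$ space in any of the three modes. Your write-up just makes explicit the bookkeeping that the paper leaves to the reader.
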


%We now show that weakly $\log n$-space bounded nondeterministic real-time counter automata can recognize some nonregular languages. This bound is tight, due to Fact \ref{fact:sublogpda}.

We have considered only a single work tape in our real-time models until now. It is known \cite{FMR67} that increasing the number of work tapes does increase the computational power of these machines. We show that deterministic PDAs and counter automata with two work tapes can recognize nonregular languages using sublinear space, doing better that the bound of Fact \ref{fact:Gabarro} for single-stack machines:

Let $ (i)_{2} $  denote the binary representation of $ i \in \mathbb{N} $. Let $ (i)^{r}_{2} $ denote the reverse of $ (i)_{2} $.
Consider the language
\begin{equation}\label{equation:erb}
	 L_{even-rev-bins} = 
	 \{ a(0)_{2}a (1)_{2}^{r}a(2)_{2}a (3)_{2}^{r}a \cdots a (2k)_{2} a (2k+1)_{2}^{r} \mid k>0 \}.
\end{equation}
In the following, substrings of the form $\{0,1\}^{*}$ delimited by a's will be called ``blocks," and the term $block_{i}$ will denote the $i$'th block to be encountered by the one-way input head. 
\begin{theorem}\label{theorem:2pda}
	$ L_{even-rev-bins} $ can be recognized by a deterministic real-time PDA with two stacks, and the total amount of space used on the stacks for accepted strings is $ O(\log n) $.  
\end{theorem}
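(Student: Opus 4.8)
The plan is to build a real-time deterministic two-stack pushdown automaton $\mathcal{M}$ that verifies the three conditions for membership in $L_{even-rev-bins}$: that the string has the delimiter shape $a\{0,1\}^{+}a\{0,1\}^{+}\cdots$ and stops immediately after some odd-indexed block $block_{2k+1}$ with $k\ge 1$; that $block_{0}=(0)_2$ and each pair $block_{2i},block_{2i+1}$ encodes the consecutive integers $2i$ and $2i+1$; and that each pair $block_{2i+1},block_{2i+2}$ encodes $2i+1$ and $2i+2$. The key observation is that $block_{2i}=(2i)_2$ is written most-significant-bit first while $block_{2i+1}=(2i+1)_2^{r}$ is written least-significant-bit first, so each block is the reversal of an ordinary binary word; since a stack reverses its contents, $\mathcal{M}$ can compare each incoming block, one symbol at a time and at the correct place value, against a copy of the predecessor number stored on a stack, while pushing the incoming block onto the other stack to serve as the stored number for the next round.

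Concretely, $\mathcal{M}$ maintains the invariant that right after reading $block_{j}$, one of its two stacks holds $(j)_2$ (least significant bit on top when $j$ is even, most significant bit on top when $j$ is odd) and the other is empty. It begins by checking that $block_{0}$ is the single symbol $0$ and pushing it. For each later block it pops the nonempty stack, matching the read symbols against the stored predecessor, while simultaneously pushing the read symbols onto the empty stack. For an odd block $block_{2i+1}$ (predecessor $2i$ on the stack, LSB on top) it is enough to check that the first read symbol is $1$ while the popped symbol is $0$ — the single bit that changes in $2i\mapsto 2i+1$ — that every subsequent read symbol equals the popped one, and that the stack empties exactly when the block ends. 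For an even block $block_{2i+2}$ (predecessor $2i+1$ on the stack, MSB on top) $\mathcal{M}$ verifies a ``$+1$'': it matches high-order bits until a disagreement; generically this disagreement is ``popped $0$, read $1$'', after which every remaining popped bit must be $1$ and every remaining read bit must be $0$, with the stack and the block ending together. The one exception is a lengthening increment $2i+1=2^{p}-1\mapsto 2^{p}$, signalled instead by a ``popped $1$, read $0$'' disagreement (or, only when $2i+1=1$, by the stack emptying during a run of matches); in that case the block must supply exactly one further symbol, a $0$, before ending. Finally $\mathcal{M}$ accepts exactly when the endmarker is reached immediately after a successfully verified odd block $block_{j}$ with $j\ge 3$.

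Every move of $\mathcal{M}$ is forced, and $\mathcal{M}$ consumes one input symbol per step while doing $O(1)$ work on each stack (one push or pop plus reading a top symbol), so $\mathcal{M}$ is a real-time deterministic PDA with two stacks. On an accepted input each stack never holds more than $\max_{j}|block_{j}|=\lfloor\log(2k+1)\rfloor+1$ symbols, while the input length is $\Theta\!\bigl(\sum_{j\le 2k+1}|block_{j}|\bigr)=\Theta(k\log k)$, whence $k=\Theta(n/\log n)$ and the combined space used on the two stacks is $O(\log k)=O(\log n)$. The step I expect to be the main obstacle is the even-block ``$+1$'' check together with its power-of-two edge case, and in particular the (necessary) observation that a carry is never traced back over more cells than are contained in the block currently being scanned, which is exactly what lets the whole construction run in real time.
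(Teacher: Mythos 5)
Your construction is essentially the paper's: while reading each block, one stack is popped to compare against the stored predecessor and the other stack records the block for the next comparison, so that the two stacks handle the alternating consecutive pairs and the push/pop reversal aligns the alternating MSB-first/LSB-first encodings, and your space analysis matches the paper's. You spell out the bit-level ``$+1$'' checks that the paper omits entirely, and these are correct up to one wording slip: in the lengthening case $2^{p}-1\mapsto 2^{p}$ with $p\ge 3$, after the first ``popped $1$, read $0$'' mismatch the machine must keep matching popped $1$'s against read $0$'s and only then see exactly one extra $0$ once the stack empties, rather than seeing a single further symbol immediately after the mismatch.
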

\begin{proof}
The machine uses one stack for checking that the members of the pairs in 
$\{(block_{1},block_{2}), (block_{3},block_{4}),...\}$ are related according to the language definition, whereas the other stack is used for the set $\{(block_{2},block_{3}),(block_{4},block_{5}),...\}$. 

Inputs that cause the machine to compare blocks all the way to the last one, say, $block_{i}$, incur the greatest space cost. The space used in the stacks in that case is $O(\log i)$, and the length of the input prefix is clearly more than $i$. We conclude that the machine uses $O(\log n)$ space.  
\end{proof}

\begin{theorem}\label{theorem:root}
For any $j > 1$, there exists a nonregular language that can be recognized by a deterministic real-time automaton with $j$ (unary) counters, such that the total amount of space used on the counters is $ O(n^{\frac{1}{j}}) $ for all input strings. 

\end{theorem}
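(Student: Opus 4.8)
The plan is to use a $j$-fold nested triangular-counting construction. Over the alphabet $\{a,\$_1,\ldots,\$_{j-1}\}$, define gadget strings recursively by $\gamma_1(p)=a^p$ and, for $r\ge 2$,
\[
  \gamma_r(p)=\gamma_{r-1}(1)\,\$_{r-1}\,\gamma_{r-1}(2)\,\$_{r-1}\cdots\$_{r-1}\,\gamma_{r-1}(p)\,\$_{r-1},
\]
and put $L_j=\{\gamma_j(v)\mid v\ge 2\}$; for $j=2$ this is simply $\{a\,\$_1\,a^2\,\$_1\cdots\$_1\,a^v\,\$_1\mid v\ge 2\}$. An easy induction gives $|\gamma_r(p)|=\Theta(p^r)$, hence $|\gamma_j(v)|=\Theta(v^j)$, so the gaps between the lengths of consecutive members of $L_j$ grow without bound (here $j>1$ is used), and the pumping-lemma argument from the proof of Theorem~\ref{theorem:log} shows that $L_j$ is nonregular.

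Next I would exhibit a real-time deterministic automaton $M_j$ with $j$ counters that parses this rigid format. Two of the counters are ``working'' counters, used at the bottom level: while reading a block $a^k$, one holds the length $k-1$ of the previous block and is decremented, while the other counts the current block up; requiring the first to reach $0$ with exactly one more $a$ still to come verifies that consecutive blocks within a $\gamma_2$-gadget increase by one, with the first block of each such gadget required to be $a^1$. The remaining $j-2$ counters are ``index'' counters, one assigned to each nesting level $r\in\{3,\ldots,j\}$. When level $r$ begins reading the sub-gadget that ought to be $\gamma_{r-1}(s)$, its index counter is loaded with $s-1$ --- which is exactly the value still sitting in a working counter as the parameter (i.e.\ the last-block length) of the just-completed $\gamma_{r-1}(s-1)$ --- and is then decremented by one each time a level-$(r-1)$ sub-gadget of this $\gamma_{r-1}(s)$ is completed; demanding that it reach $0$ with exactly one further sub-gadget to come before the next $\$_{r-1}$ verifies that $\gamma_{r-1}(s)$ has parameter exactly $s$. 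One checks that at every step exactly the two working counters together with the index counters of the currently open levels are in play, for a total of $j$, and that each input symbol triggers only a bounded number of counter operations, so $M_j$ is genuinely real-time and deterministic.

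For the space bound I would argue that on \emph{any} input, accepted or not, the counters only ever hold quantities bounded by the nesting parameters reached so far, and that after reading a prefix of length $m$ these are all $O(m^{1/j})$, because $\sum_{s\le p}|\gamma_{r-1}(s)|=\Theta(p^r)$. Since $M_j$ rejects the instant the rigid format is violated, the bound $O(n^{1/j})$ holds for every input of length $n$.

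The main obstacle I expect is the bookkeeping compressed into the second paragraph: one must check that the test ``this sub-gadget's parameter equals the previous one plus one'' can be carried out incrementally, folded into the decrementing of that level's index counter while the sub-gadget is being read, rather than as a comparison performed after the sub-gadget ends --- the latter would both break real-timeness and require keeping more than $j$ counters alive at once. Making the reassignment of the physical counters at each gadget boundary line up consistently across all levels (the last-block-length counter of a finished gadget becoming, one level up, the index counter for the next gadget; freed counters being recycled as working counters; and so forth) is the part that needs to be spelled out carefully.
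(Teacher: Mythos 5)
Your construction is essentially the paper's own: the paper defines $L_{\textbf{2}}=\{a_{1}a_{0}a_{1}a_{0}^{2}\cdots a_{1}a_{0}^{k}\}$ and nests it recursively into $L_{\textbf{j+1}}=\{a_{j}w_{\textbf{j},1}a_{j}w_{\textbf{j},2}\cdots a_{j}w_{\textbf{j},k}\}$, using two counters at the innermost level and one extra counter per level of nesting, with the counter left holding the parameter of a just-completed sub-block recycled as the index counter for the next one, and the same $\Theta(i^{j})$ prefix-length versus $O(i)$ counter-value space analysis. Your gadget notation and explicit separators are only cosmetic differences, so the proposal is correct and takes the same route as the paper.
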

\begin{proof}
We start by considering the language
\[ L_{\textbf{2}} = \{a_{1}a_{0}a_{1}a_{0}^{2}a_{1}a_{0}^{3}a_{1} \cdots a_{1} a_{0}^{k} \mid k \in \mathbb{N} \}, \]
(inspired by \cite{ABB80,Ga84},) on the alphabet $\{a_{0},a_{1}\}$. $L_{\textbf{2}} $ is recognized by a deterministic real-time automaton with two  counters.	
	The algorithm, say, $\mathcal{A}_{2}$, is similar to, in fact simpler than, that  of Theorem \ref{theorem:2pda}. As for the space usage, assume that the machine arrives at a decision to accept or reject after scanning the $i$'th block of $a_{0}$'s. Since the length of the scanned input prefix is more than $ \sum_{l=1}^{i-1}l+1 = \Theta(i^{2})$, and the counters have used $O(i)$ space up to that point, the machine is strongly $O(\sqrt{n})$-space bounded. Let $w_{\textbf{j},k}$ denote the $k$'th shortest element of $ L_{\textbf{j}} $. Note that $\mathcal{A}_{2}$ accepts $w_{\textbf{2},k}$ with one counter containing zero, and the other counter containing $k$.
	
	We define 	
	\[ L_{\textbf{3}} = \{a_{2} w_{\textbf{2},1}a_{2} w_{\textbf{2},2}a_{2} w_{\textbf{2},3} a_{2} \cdots a_{2} w_{\textbf{2},k}  \mid k \in \mathbb{N} \} \]	
on the alphabet $\{a_{0},a_{1},a_{2}\}$. 	A deterministic real-time machine with three counters can recognize $ L_{\textbf{3}}$ as follows. The automaton processes each block delimited by $a_{2}$'s by using two of the counters for implementing $\mathcal{A}_{2}$. If  the input head arrives at the $i$'th $a_{2}$ ($i>1$), two of the counters contain zero, whereas the third counter contains $i-1$. The two counters containing zero are now used to run $\mathcal{A}_{2}$ on the $i$'th block, while the remaining counter is used to check that the number of $a_{1}$'s in the $i$'th block is indeed $i$. Two counters have value zero, and the third one contains $k$ when this machine accepts $w_{\textbf{3},k}$. By a reasoning similar to the one about $ L_{\textbf{2}}$ above, the space usage is $ O(n^{\frac{1}{3}}) $. The generalization to  
		\[ L_{\textbf{j+1}} = \{a_{j} w_{\textbf{j},1}a_{j} w_{\textbf{j},2}a_{j} w_{\textbf{j},3} a_{j} \cdots a_{j} w_{\textbf{j},k} \mid k \in \mathbb{N} \}	\]
	is now simple.
	
\end{proof}

Even with one work tape, the probabilistic versions of these machines \cite{HS10,Fr79} also require less space than the deterministic versions for recognizing some languages. The language $L_{even-rev-bins}$ of Equation \ref{equation:erb} can be recognized by a real-time probabilistic PDA with error bound $\frac{1}{3}$ as follows: Reject the input at the start with probability $\frac{1}{3}$. With the remaining probability, toss a fair coin to split to two computational paths, each using its single stack to mimic the corresponding stack of Theorem \ref{theorem:2pda}. If the input is in $L_{even-rev-bins}$, both paths accept, yielding an overall acceptance probability of $\frac{2}{3}$. Otherwise, at least one path rejects, resulting in an acceptance probability of at most $\frac{1}{3}$. The space requirement for members of the language is $O(\log n)$, as in Theorem \ref{theorem:2pda}.

With a similar approach, real-time probabilistic automata with one counter that can recognize any language of the $ L_{\textbf{j}}$  family from Theorem \ref{theorem:root} using $ O(n^{\frac{1}{j}}) $ space for member strings, albeit with error bounds that increase with $j$, can be constructed. 

\section{Open questions}\label{Sec:conc}

We conclude with a list of problems for future work.

An obvious question left open in Section \ref{Sec:ATM} is whether the double logarithmic lower bounds for the recognition of nonregular tally languages by real-time nondeterministic and alternating TMs are tight. 

How does alternation affect the amount of useful space for (both one-way and real-time) PDAs and counter automata?

What are the lower total space bounds for nonregular language recognition for machines with multiple work tapes, as exemplified in Section \ref{Sec:diger}?

Real-time ``nondeterministic" quantum Turing machines (i.e., those that accept with nonzero probability if and only if the input is a member of the language) are known \cite{YS10A} to be able to recognize nonregular languages even with constant space, whereas no such language can be recognized  by small-space probabilistic Turing machines in this mode, since these are synonymous with nondeterministic TMs (Table \ref{table:tight-bounds}). When two-sided unbounded error is allowed, probabilistic finite automata can recognize nonregular languages \cite{Ra63A}. What are the minimum space requirements for real-time nonregular language recognition  of probabilistic TMs in the   bounded-error regime \cite{Fr85}? Can one improve on the space usage of the probabilistic machines mentioned at the end of Section \ref{Sec:diger}?

What is the minimum amount of space required for a real-time
quantum Turing machine to recognize a nonregular language with bounded
error?

%In \cite{AW02}, it was shown that two-way finite automata with quantum and classical states (2QCFAs) can recognize a nonregular language in quadratic time. Is this time bound tight? We also do not know whether 2QCFAs defined only with rational transition amplitudes can recognize a nonregular language in polynomial time.

%What is minimum pushdown space for a 2PDA to recognize a unary nonregular language by also considering time bounds - linear, sub-quadratic, etc? Here the machine can be considered as deterministic, nondeterministic, alternating, probabilistic, quantum, etc. The same question can also be investigates for counter machines. 

\textbf{Acknowledgements.} We are grateful for the constructive comments of the two anonymous reviewers. We also thank Stefan D. Bruda, Klaus Reinhardt, Giovanni Pighizzini, Juraj Hromkovi\v{c}, and R\={u}si\c{n}\v{s} Freivalds  for their helpful answers to our questions. 

\bibliographystyle{alpha}
\bibliography{YakaryilmazSay}

\end{document}